\documentclass{llncs}

\usepackage{amsmath}
\usepackage{amssymb}
\usepackage{wrapfig}
\usepackage{float}
\usepackage{graphicx}
\usepackage{caption}
\usepackage{subfloat}
\usepackage{algpseudocode}
\usepackage{algorithm}
\usepackage{etoolbox}
\usepackage{amssymb}
\usepackage{tikz}
\usepackage{mathtools}







\title{Succint greedy routing without metric on planar triangulations}
\author{Pierre Leone, Kasun Samarasinghe}
\institute{Computer Science Department, University of Geneva, Battelle A, route de Drize 7, 1227 Carouge, Switzerland}

\begin{document}

\maketitle

\begin{abstract}
Geographic routing is an appealing routing strategy that uses the location information of the nodes to route the data. This technique uses only local information of the communication graph topology and does not require computational effort to build routing table or equivalent data structures. A particularly efficient implementation of this paradigm is {\it greedy routing}, where along the data path the nodes forward the data to a neighboring node that is closer to the destination. The decreasing distance to the destination implies the success of the routing scheme. A related problem is to consider an abstract graph and decide whether there exists an embedding of the graph in a metric space, called a {\it greedy embedding}, such that greedy routing guarantees the delivery of the data. In the present paper, we use a metric-free definition of greedy path and we show that greedy routing is successful on planar triangulations without considering the existence of greedy embedding. Our algorithm rely entirely on the combinatorial description of the graph structure and the coordinate system requires ${\cal O}\bigl(log(n)\bigr)$ bits where $n$ is the number of nodes in the graph. Previous works on greedy routing make use of the embedding to route the data. In particular, in our framework, it is known that there exists an embedding of planar triangulations such that greedy routing guarantees the delivery of data. The result presented in this article leads to the question whether the success of (any) greedy routing strategy is always coupled with the existence of a greedy embedding? 

\end{abstract}


\section{Introduction}
Geometric routing is an appealing routing technique that uses the position of the nodes for routing data in communication networks. In particular, {\it greedy routing} consists in using routing paths such that at each hop the distance to the destination decreases. Greedy routing is local. Unfortunately, there are simple examples where all neighboring nodes of a node are at larger distance to the destination and greedy cannot be applied. It is then relevant to determine in which situations greedy routing guarantees the delivery of data. Besides the relevance to routing data in communication networks, greedy routing has lead to interesting works in graph theory and computational geometry.

The paper \cite{DBLP:conf/algosensors/PapadimitriouR04} is usually considered as a first landmark. In particular,  it is shown how a $3$-connected graph can embedded in ${\mathbb R}^3$ in such a way that greedy routing guarantees delivery. Moreover, it is {\it conjectured}  that any $3$-connected planar graph can be embedded in $\mathbb R^2$ such that greedy routing guarantees delivery, what is now called a {\it greedy embedding} (in $\mathbb R^2$). This conjecture leads to intensive research and an exhaustive survey of all the results is not presented here. We mention key contributions in the main directions around this conjecture and position the contribution of this article among these works.

The conjecture was proved to be true. In \cite{DBLP:conf/focs/MoitraL08,eppstein2011succinct} it is proved for $3$-connected graphs, in \cite{bose1999online} in Delaunay triangulations, in \cite{chen2007distributed} for graphs that satisfy conditions with respect to the power diagram. In \cite{DBLP:conf/infocom/Kleinberg07}, a greedy embedding in the hyperbolic plane of a connected finite graphs is constructed. More related to our approach, in \cite{Dhandapani08} the conjecture is proved for planar triangulations (maximal planar graphs), see also \cite{he2010schnyder,he2011succinct}. All these approaches rely on a distance metric that is used by greedy routing. The papers \cite{eppstein2011succinct,he2010schnyder,he2011succinct} limit the memory requirement to ${\cal O}\bigl(log(n)\bigr)$ bits to represent the coordinates. Such coordinate systems are called {\it succint} and this property is important for the design of scalable routing scheme. Our coordinate system, see definition \ref{def:coordinates}, is {\it succint}. In \cite{Dhandapani08,he2010schnyder,he2011succinct}, Schnyder's caracterization of (maximal) planar graphs \cite{schnyder} is used. This characterization is discussed in section \ref{vrac} and is also used in this work. However, our approach is to avoid the definition of a metric and the computation of the planar embedding, see for instance \cite{angelini2009algorithm} for an algorithm to compute the greedy embedding of planar triangulations. We rely on the metric-free definition of greedy paths in \cite{DBLP:journals/tmc/LiYL10} - without embedding the graph. Moreover, the coordinate system used in \cite{Dhandapani08,he2010schnyder,he2011succinct} is more complex to compute than the one we use. Briefly, in Schnyder work, planar graphs are characterized by the existence of three total order relations on the vertex set of the graph (and extra conditions). Using these order relations Schnyder builds three spanning\footnote{ Actually, spanning internal nodes of the triangulation.} directed trees called the {\it realizer} and the coordinates are computed using these trees. This coordinate system has relevant properties, for graph drawing, see for instance \cite{nishizeki2004planar}, that we do not need for routing\footnote{Although it is relevant to ask if these properties are necessary for constructing a greedy embedding.}. The particularities of greedy routing with respect to greedy drawing are already pointed out in \cite{eppstein2009succinct}.

In this article, we use directly the order ranks for coordinates, see Definition \ref{def:coordinates} without (non-local) extra-computations. This approach is coherent with previous works. In particular \cite{DBLP:conf/HucJLR12,DBLP:conf/algosensors/HucJLR10} where the three order relations are obtained by the nodes composing a communication network by measuring the distances to three distinguished nodes, see Figure \ref{fig:exvrac}. The motivation for this approach to avoid the computation of the coordinates, i.e. the usual scheme for localizing the nodes is $1)$ measure distances and, 2) compute the coordinates. In the subsequent works \cite{DBLP:conf/icpads/SamarasingheL12,DBLP:conf/sensornets/SamarasingheL14}, we design the classical {\it greedy-face} routing paradigm in this simple coordinate system.

We emphasize that the underlying assumptions in these works (or generally in geographic routing) is that the communication graph is a Unit Disk Graph (UDG) and the nodes are located in a $2D$ region, in particular we can use the Jordan curve theorem to prove that face routing guarantees the delivery. In this work, we do not make such assumption. We assume that the graph is given by the sets of nodes and edges in an abstract way.

\noindent{\bf Our contribution} In this paper we show that given a maximal planar graph we can
\begin{itemize} 
\item Provide a metric-free definition of greedy paths.
\item Design a local greedy routing algorithm that guarantees delivery.
\item Decouple the problems of greedy routing and greedy embedding.
\end{itemize}
Moreover, all our computations are constructive and the coordinate system requires ${\cal O}\bigl(log(n)\bigr)$ bits and can be qualified of {\it succint} \cite{eppstein2011succinct}.

In section \ref{vrac} we present Schnyder's characterization of planar maximal graphs, fix the notations and define greedy paths. In section \ref{greedy} we prove the properties that we need to build a greedy path between any two nodes.  Finally, in section \ref{routingmaxplanargraph} we state the main result of the paper about the existence of greedy paths.

\section{Schnyder three-dimensional representation and coordinates}
\label{vrac}

Given a planar graph $G=(V,E)$, it is proved in \cite{schnyder} that there exists three total order relations on $V\times V$, denoted $<_1, <_2, <_3$ such that
\begin{equation}
\begin{array}{l l}\label{eq:schnydercond}
&a)~~\bigcap_{i=1,2,3} <_i =\emptyset,\text{, and}\\
&b)~~\forall (x,y)\in E,  \forall z\not\in\{x,y\}~ \exists i\in\{1,2,3\} \text{ s.t. } (x,z)\in <_i \text{ and } (y,z)\in <_i.
\end{array}
\end{equation}
This is called a {\it (3-dimensional) representation} of the planar graph.
We also use the notation $x<_iz$ for $(x,z)\in <_i$ and we say $v$ is a neighboring node of $u$ to say that $(u,v)\in E$. 

\begin{definition}(standard representation, internal and external nodes)\label{def:internal} The representation is {\it standard} if there are three distinguished elements $A_1, A_2, A_3$ such that $A_i$ is the maximal element for $<_i$ and the remaining $A_{i-1}, A_{i+1}$\footnote{We use the notation $ i+1\equiv i \text{ mod } 3 +1$.} are the two smallest elements of $<_i$ (in any order, $A_{i-1}<_1 A_{i+1}$ or $A_{i+1}<_1 A_{i-1}$). Any representation can be turned to a standard one \cite{schnyder}. 

 The distinguished elements $A_i, i=1,2,3$ of a standard representation are called {\it external}, the others nodes are called {\it internal}.
\end{definition}

\begin{definition}\label{def:coordinates}(Coordinates) Given a node $u\in V$ the rank of $u$ in the order $<_i$ is denoted by $rank_i(u)$. If there are $n$ elements in $V$,  the smallest element has rank $1$ and the largest rank $n$. The coordinates of the node $u$ are $(u_1,u_2,u_3)=(rank_1(u), rank_2(u), rank_3(u))$. This coordinate system requires 3 integers per nodes and hence, it scales like ${\cal O}\bigl(log(n)\bigr)$ bits and is {\it succint} \cite{eppstein2011succinct,he2010schnyder,he2011succinct}.
\end{definition}

\begin{figure}[!]\hskip -2cm
\includegraphics[scale=0.3]{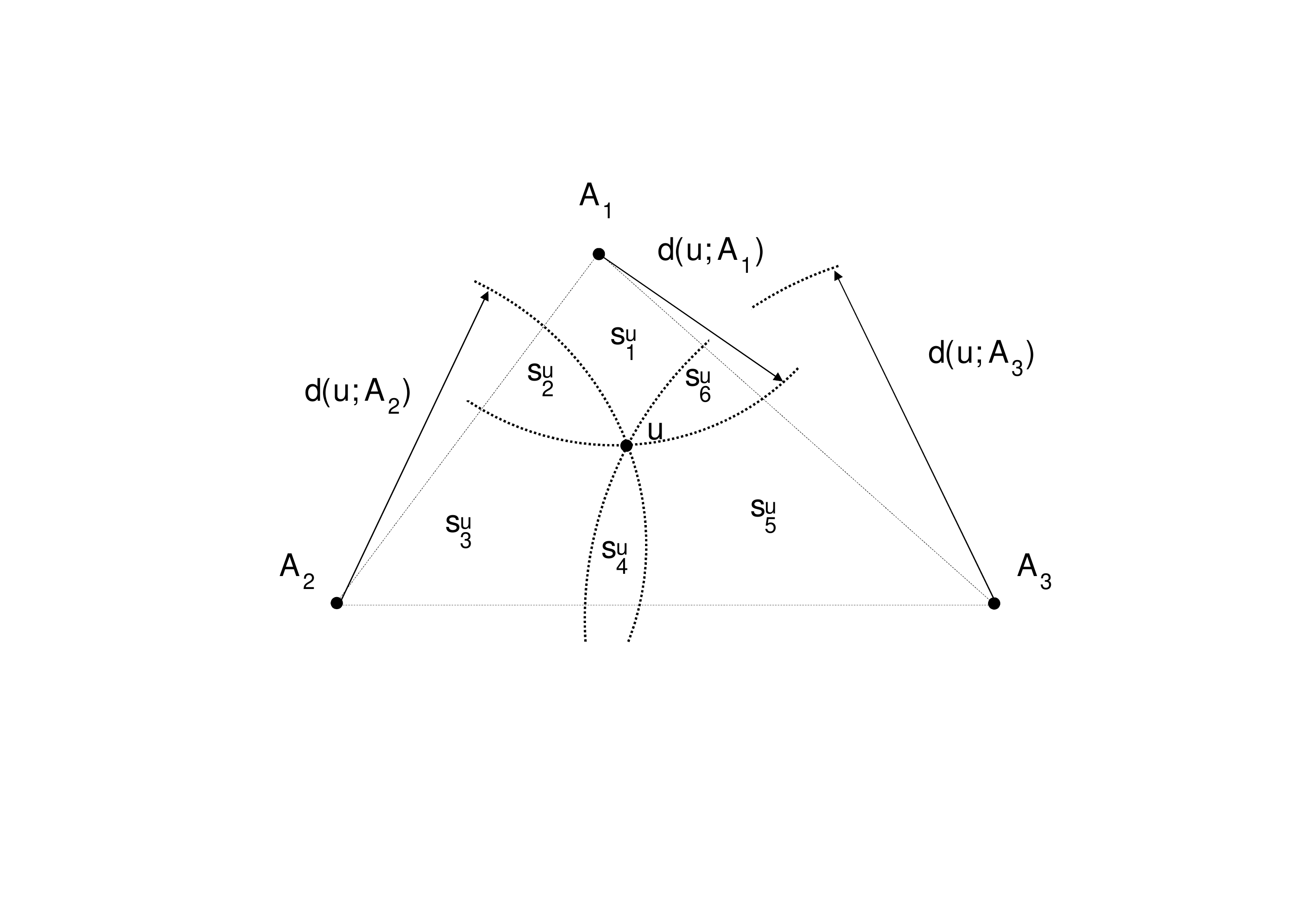}\hskip -1cm\includegraphics[scale=0.3]{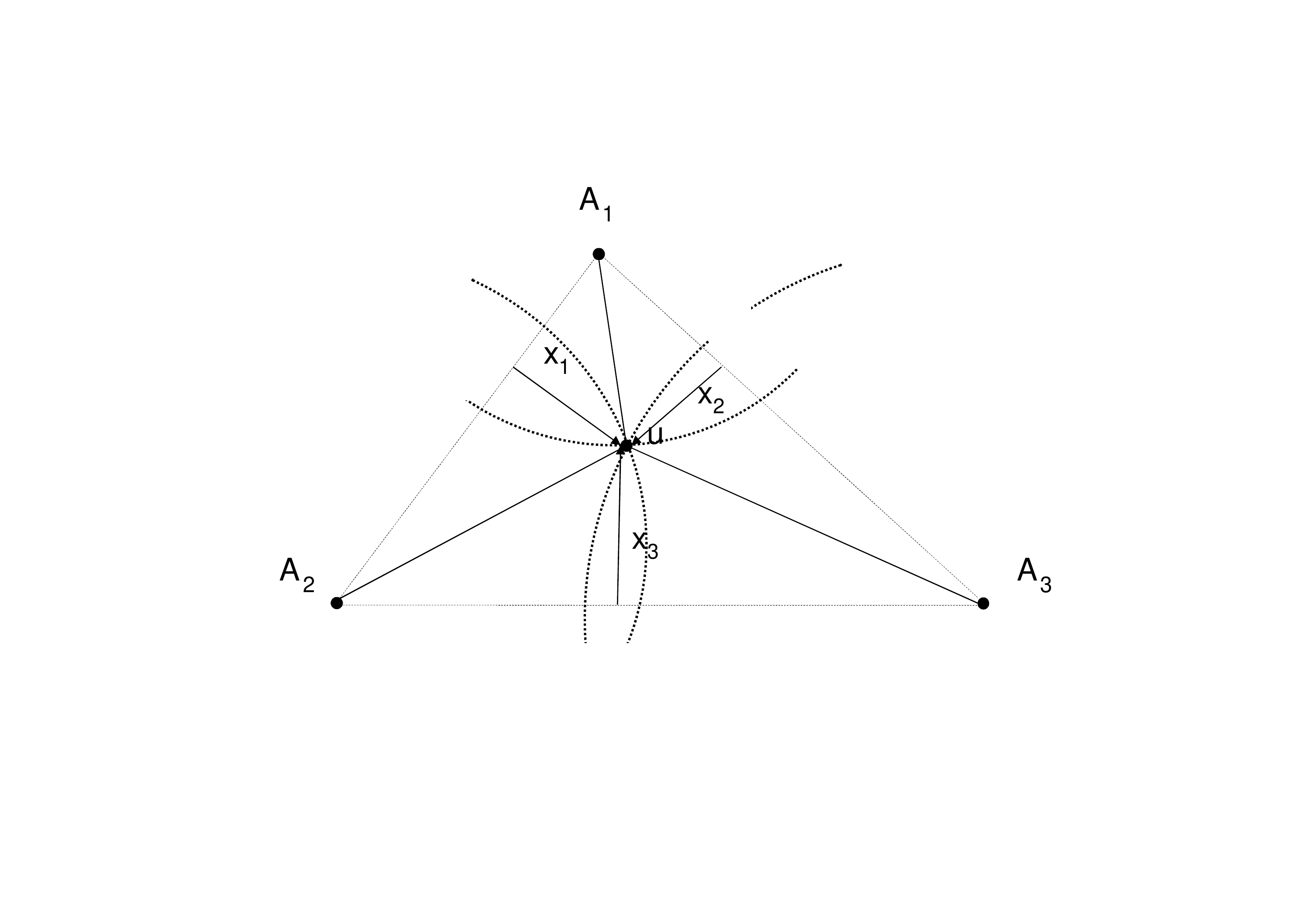}\vskip -2cm
\caption{Two realistic way of getting the order relations $<_i$. Coordinate assignment with with raw distances from anchors on the left and  perpendicular distances from heights of the triangles on the right}
\label{fig:VRAC2}\label{fig:VRAC}\label{fig:exvrac}
 \end{figure}

In ~\cite{DBLP:conf/sensornets/SamarasingheL14} the three order relations are obtained by measuring the distances from the node $u$ to three anchors $d(u,A_i)$ or,  in  \cite{DBLP:conf/icpads/SamarasingheL12} by the three heights of the triangles (needs more computation and information) $uA_iA_j$, see Figure \ref{fig:exvrac}. The order relations are defined by $u<_i v \Leftrightarrow d(u,A_i)>d(v,A_i)$. The motivation for using this VRAC coordinate system \cite{DBLP:conf/algosensors/HucJLR10}, in ad-hoc communication networks, is to avoid the usual computations for the  localization of the nodes. Indeed, most localization schemes start to measure distances and then proceed to computations. In the VRAC coordinate system the computations are avoided. Although in this article we do not assume that the order relations are given in this way, it is helpful to keep these representations in  mind. To help visualize the definitions and results we show how they can be represented if such a geometrical model is assumed. We emphasize that the goal of the article is the design of greedy routing algorithm without geometry.

The three order relations are total\footnote{A total order is a binary relation which is valid for \textit{all the pairs} in a set} and it makes sense to associate the minimum of a set with respect to one of the three order. We will denote this by $\text{min}_i$ for $i=1,2,3$.

These three orders permit the definition of sectors associated with a node $u$.\\
\begin{definition}(Sectors) \label{def:sectors} We define the following sectors associated to a node $u\in V$, see Figure \ref{fig:VRAC}. Note that the reference node $u$ does not belong to the sectors.
\begin{align*}
s_1^u = \{v\mid ~u~<_ 1 ~v, ~u~>_ 2 ~v, ~u~>_ 3 ~v\}.\\
s_2^u = \{v\mid ~u~<_ 1 ~v, ~u~<_ 2 ~v, ~u~>_ 3 ~v\}.\\
s_3^u = \{v\mid ~u~>_ 1 ~v, ~u~<_ 2 ~v, ~u~>_ 3 ~v\}.\\
s_4^u = \{v\mid ~u~>_ 1 ~v, ~u~<_ 2 ~v, ~u~<_ 3 ~v\}.\\
s_5^u = \{v\mid ~u~>_ 1 ~v, ~u~>_ 2 ~v, ~u~<_ 3 ~v\}.\\
s_6^u = \{v\mid ~u~<_ 1 ~v, ~u~>_ 2 ~v, ~u~<_ 3 ~v\}.\\
\end{align*}

\noindent Notice that the coordinates of the nodes in Definition \ref{def:coordinates} make possible to determine in which sector a node belongs relatively to another one. Sometimes, the sector $s^u_i$ is also referred to as {\it the sector $i$ of $u$}\footnote{We use the notation $ i+1\equiv i \text{ mod } 6 +1$ if $i$ is the index of a sector, i.e. $s^u_i$.}.
\end{definition}

\begin{definition}\label{def:sectornotation}
Given a node $D$, we also use the convenient notation $s^u_D$ to denote the sector $j$ of $u$ such that $D\in s^u_j$, i.e. $D\in s^u_D$.
\end{definition}

There is a useful way to distinguish the edges that uses the definition of partial orders $<_1^*, <_2^*, <_3^*$, see Lemma 3.1 in \cite{schnyder}
\begin{definition} For each $i\in\{1,2,3\}$ we define
\begin{equation*}
(u,v)\in <_i^*~ \Longleftrightarrow~ (u,v)\in<_i \text{ and } (v,u)\in<_{i+1} \text{ and } (v,u)\in<_{i-1}.
\end{equation*}
Or equivalently
\begin{equation*}
(u,v)\in <_i^*~ \Longleftrightarrow~  v\in s^u_{2i-1}.
\end{equation*}
\end{definition}
\begin{property}\label{prop:order*}
The empty intersection property $a)$  in $(\ref{eq:schnydercond})$ implies that for each $u,v\in V$ there exists exactly one $i\in{1,2,3}$ such that $(u,v)\in <_i^*$ or $(v,u)\in <_i^*$ (equivalently $v\in s^u_{2i-1}$ or $u\in s^v_{2i-1}$). It is convenient to rememember that if $(u,v)<_i^*$ then $v\in s^u_{2i-1}$, i.e. $s^u_1$ or $s^u_3$ or $s^u_5$, the indexes are odd and even otherwise.
\end{property}

\begin{property}\label{prop:sectors}
A node $u$ has at most one edge $(u,v)\in E$ such that $v\in s_{2i-1}^u$. Moreover,  such a node $v$ satisfies that $v<_i z ~\forall z\in s_{2i-1}^u, i=1,2,3$, i.e. $v=min_i\{z\mid z\in s^u_{2i-1}\}$,  see Lemma 3.1 of \cite{schnyder},  this follow from condition $b)$ in $(\ref{eq:schnydercond})$. These properties can be written
\begin{align}
\left.
{\bf\text{If } v\in s^u_1,~z\not=v\text{  we have }}~~\begin{array}{l l}
z<_2 u\\
z<_3 u
\end{array}\right\} \Rightarrow z>_1 v.\label{eq:propertesu1}\\
\left.
{\bf\text{If } v\in s^u_3,~z\not=v\text{  we have }}~~\begin{array}{l l}
z<_1 u\\
z<_3 u
\end{array}\right\} \Rightarrow z>_2 v.\label{eq:propertesu3}\\
\left.
{\bf\text{If } v\in s^u_5,~z\not=v\text{  we have }}~~\begin{array}{l l}
z<_1 u\\
z<_2 u
\end{array}\right\} \Rightarrow z>_3 v.\label{eq:propertesu5}
\end{align}
\end{property}
Property \ref{prop:sectors} is from \cite{schnyder} and the proof uses part $ b)$ of the graph representation $(\ref{eq:schnydercond})$. There is a nice geometric {\it void} condition associated to this property. Indeed, if we assume that the edge $(u,v)$ belongs to $s^u_1$ then the existence of a node $ u<_1 w <_1 v$ violates the second condition of $(\ref{eq:schnydercond})$, see Figure \ref{fig:voidregion} and Figure 3c of \cite{Dhandapani08}. It is interesting to compare this void region with the corresponding ones of the planar Relative Neighborhood Graphs (RNG) or Gabriel Graphs (GG) \cite{DBLP:conf/mobicom/KarpK00}.

\begin{figure}\hskip-1cm
\includegraphics[scale=0.3]{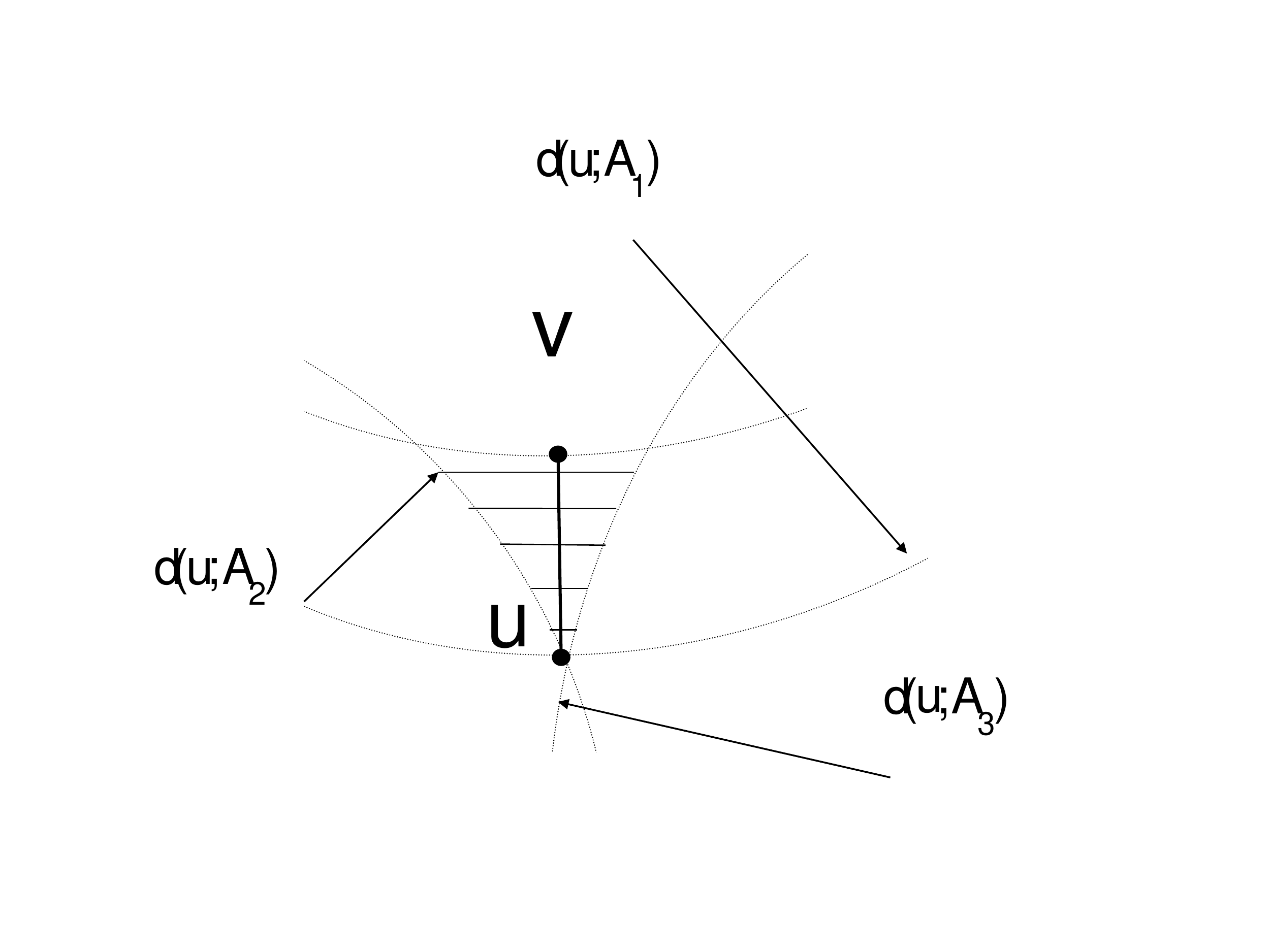}\hskip -1cm\includegraphics[scale=0.3]{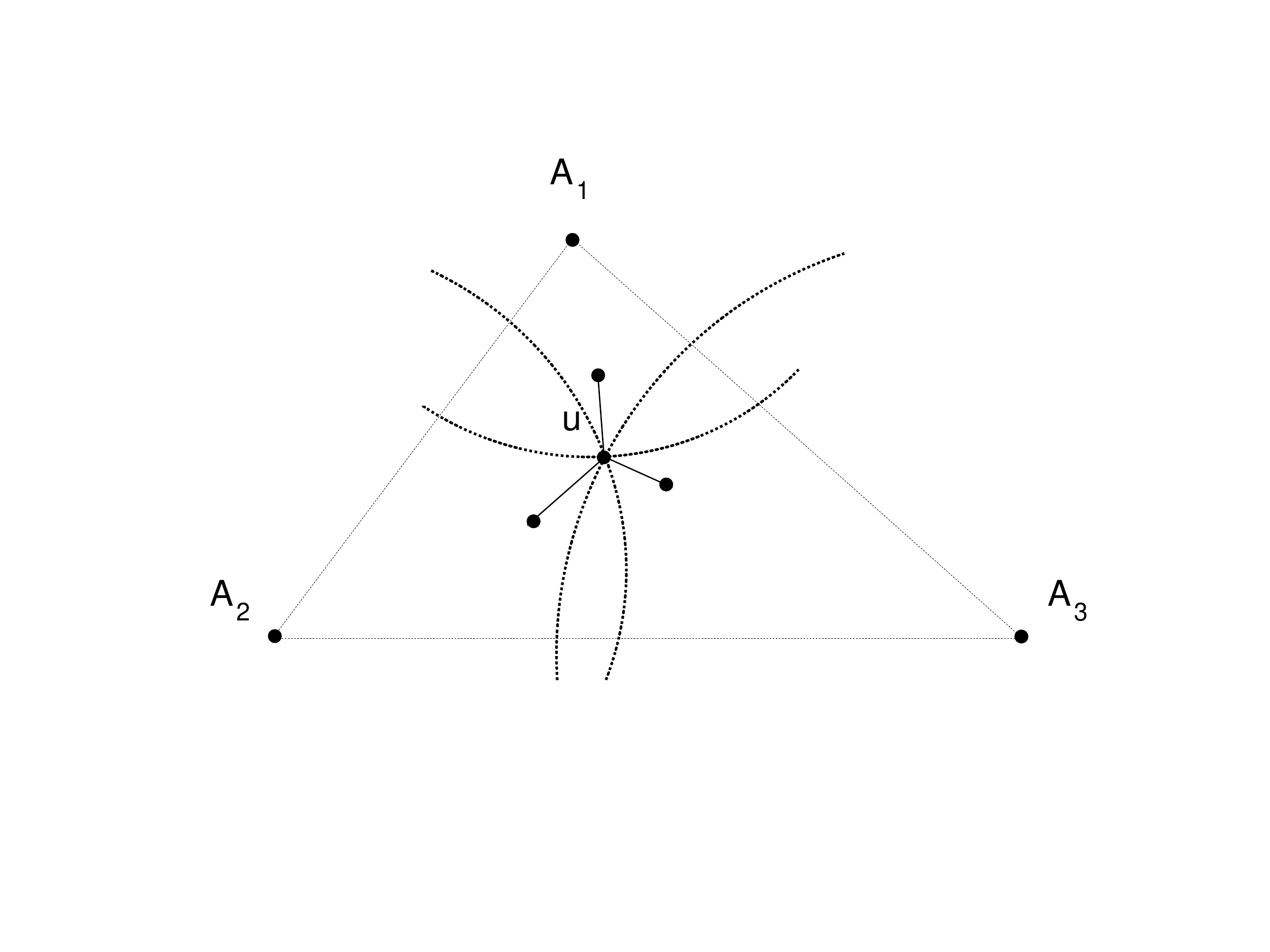}\caption{{\bf Left: }The geometric interpretation of Property \ref{prop:sectors}. The hatched region is void. {\bf Right:}  Locally a node $u$ of a maximal planar graph as exactly three edges in the sectors $s_1^u, s_3^u, s^5_u$ if the representation is standard. The undetermined (0,1,2,...) number of edges in the sectors $s^u_2, s^u_4, s^u_6$ are not represented. }\label{fig:localview}\label{fig:voidregion}
\end{figure}
\subsection*{Informal presentation of the routing strategy}

\noindent If the graph is planar maximal  and the representation standard then each internal node $u$, see Definition \ref{def:internal},  has exactly one edge in each sector $s^u_1, s^u_3, s^u_5$ and an indeterminate ($0,1,2,\ldots$) number in the remaining sectors $s^u_2, s^u_4, s^u_6$. Indeed, because the representation is standard the sectors $s^u_1, s^u_3, s^u_5$ contain the nodes $A_1, A_2, A_3$ respectively and are not empty. The maximality assumption implies that if there is an option of adding an edge and keeping the planarity property then the edge is present \cite{schnyder}. It is helpful to look at the geometric visualization in the right of Figure \ref{fig:localview}.

For routing from a node $u$ to a destination $D\in s^u_1\cup s^u_3\cup s^u_5$ the natural option is to follow the edge $(u,v)$ such that $v\in s^u_D$ ($=s^u_1$ or $s^u_3$ or $s^u_5$). Next, from $v$, if $D\in s^v_1\cup s^v_3\cup s^v_5$ we repeat the same strategy. However, it may happen that $D\not\in s^v_1\cup s^v_3\cup s^v_5$, see Proposition \ref{lemmagreedy}.  In this case $D\in  s^v_2\cup s^v_4\cup s^v_6$ and the existence of an edge in the sector $s^u_D$ is not provided by the Schnyder's characterization $(\ref{eq:schnydercond})$. Nevertheless, in Proposition \ref{prop:greedyeven} we show how we can route the data in this case.

\section{Greedy Routing}
\label{greedy}

Our greedy routing technique differs from the classical ones as we do not assume that a metric is given. Instead, we use  the metric-free axioms characterizing greedy paths provided in \cite{DBLP:journals/tmc/LiYL10}, i.e. given a destination $D$ we have 1. ({\bf transitivity})  if node $v$ is greedy for $u$ and $w$ is greedy for $v$ then $w$ is greedy for $u$ as well and  2. ({\bf odd symmetry}) if $v$ is greedy for $u$ then $u$ is not greedy for $v$. Moreover, the coordinate system that we use is different than the one used in the others works that are based on Schnyder's characterization of planar graph, for example  in \cite{DBLP:dblp_journals/tcs/HeZ13,Dhandapani08}. Indeed, they use a coordinate system used in \cite{schnyder} that is more complex to compute than our in Definition \ref{def:coordinates}. This difference is possible here because initially the coordinate system was designed to draw the planar graph, while we limit our purpose to route the data.

\begin{definition} For destination node $D$, a path $\{u^k\}$ is a greedy path if there exists $i\in\{1,2,3\}$ such that
\begin{equation}\label{eq:shrinkingregion}
\forall k~~ u^{k+1}<_i{u^k},\text{ or } \forall k~~ u^{k+1}>_i{u^k}.
\end{equation}
For a greedy path there is a coordinate that changes monotonically. 
\end{definition}
Because the coordinates change by at least one unit along a path, a  greedy path must stop. In the following we build greedy paths from $u$ to $D$ such that $u<_i u^k<_iD$ the fact that $D$ is an upper bound and the construction continues while $u^k<_i D$ implies the convergence of the sequence to $D$.
\remark In the proofs of Propositions \ref{prop:greedyeven} and \ref{lemmagreedy} we use the assumption that the graph is maximal to say that given a node $u$ there exists neighboring nodes in the sectors $s^u_1, s^u_3, s^u_5$. Unfortunately we must proceed with caution if the node $u$ is one of the distinguished nodes $A_1, A_2, A_3$ since these nodes may not have any neighboring nodes in these sectors. Actually, these nodes do not cause any trouble because there is a path from any internal nodes to them with increasing coordinate $rank_1, rank_2, rank_3$ respectively. They are also all connected to each others. For these reasons and in order to make our best to simplify the exposition we no longer make any reference to these particular nodes in the proofs.

In the proof of Proposition \ref{prop:greedyeven} we need the following Proposition
\begin{proposition}\label{lemmatransitivity} $D'\in s^D_i$ and $D''\in s^{D'}_i$ then $D''\in s^D_i$
\end{proposition}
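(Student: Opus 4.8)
The plan is to reduce the claim to the transitivity of the three total orders $<_1,<_2,<_3$, using the fact that membership of a node in a given sector of another node is characterized by a fixed pattern of three order comparisons.

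First I would record the elementary observation that, for each sector index $i\in\{1,\dots,6\}$, Definition \ref{def:sectors} expresses the relation $v\in s^u_i$ as the conjunction of exactly three statements, one per order: either $u<_1 v$ or $u>_1 v$, either $u<_2 v$ or $u>_2 v$, and either $u<_3 v$ or $u>_3 v$. Writing $\epsilon_j(i)$ for the relevant comparison symbol (``$<_j$'' or ``$>_j$'') occurring in sector $i$, we have $v\in s^u_i \iff (u\,\epsilon_1(i)\,v)\wedge(u\,\epsilon_2(i)\,v)\wedge(u\,\epsilon_3(i)\,v)$. The key point is that this pattern depends only on $i$, not on the nodes involved.

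Next, applying this to the two hypotheses: from $D'\in s^D_i$ we get $D\,\epsilon_j(i)\,D'$ for $j=1,2,3$, and from $D''\in s^{D'}_i$ we get $D'\,\epsilon_j(i)\,D''$ for $j=1,2,3$. Since each $<_j$ is a total order it is transitive, and so is its reverse $>_j$; hence for every $j$ the two comparisons $D\,\epsilon_j(i)\,D'$ and $D'\,\epsilon_j(i)\,D''$ (which carry the same symbol, because both come from sector $i$) chain to give $D\,\epsilon_j(i)\,D''$. Collecting the three resulting comparisons and reading Definition \ref{def:sectors} backwards yields $D''\in s^D_i$, which is the conclusion.

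There is essentially no obstacle here: the only thing that has to be noticed is that the two sector memberships involve the same index $i$, so there is no sign mismatch between corresponding coordinates and plain transitivity of the three orders suffices. It is worth remarking in passing that the statement would fail if the two memberships used different sectors, which is precisely why both hypotheses refer to $s^{(\cdot)}_{i}$ with the same $i$. If a fully case-free write-up is preferred, one can simply verify the three coordinate chains for a single generic sector, say $s_1$ with pattern $(<_1,>_2,>_3)$, and note that all other sectors are handled verbatim after flipping the appropriate fixed signs.
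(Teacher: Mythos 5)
Your proof is correct and is essentially an expanded version of the paper's own one-line argument, which also invokes the transitivity of the inequalities defining the sectors. The explicit observation that the comparison pattern depends only on the sector index $i$ is exactly the point the paper leaves implicit.
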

\begin{proof}
This property follows directly from the transitivity of the inequalities in the definition of the sectors $(\ref{def:sectors})$.
\end{proof}

\begin{proposition}\label{prop:greedyeven} We assume that the graph $G$ is triangular (or equivalently planar maximal, this implies the existence of a unique edge in each sectors $s^u_1, s^u_3, s^u_5$ of $u$ for all $u$ internal nodes). Then provided that the destination $D$ belongs to $s^u_2$  (or $s^u_4$, or $s^u_6$) then there is a path  $\{u^i\}$ in $G$ with $u^0=u$ such that $u^{i+1} \in s_2^{u^i}$ ($u^{i+1} \in s_4^{u^i}$ or $u^{i+1} \in s_6^{u^i}$ respectively), and the path converges to $D$.

Along the path the coordinate $rank_3$ ($rank_1$, $rank_2$) decreases monotonically if $D\in s^u_2$ ($D\in s^u_4$, $D\in s^u_6$ respectively).
\end{proposition}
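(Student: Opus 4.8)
The plan is to reduce to the case $D\in s^u_2$: the cases $D\in s^u_4$ and $D\in s^u_6$ follow by cyclically permuting $<_1,<_2,<_3$, which cyclically permutes the odd sectors, the even sectors, and the coordinates $rank_1,rank_2,rank_3$. Two elementary facts would be used throughout. First, if $v\in s^w_2$ then $v<_3 w$, so a path with $u^{i+1}\in s^{u^i}_2$ has strictly decreasing $rank_3$; as long as $D$ remains in the sector $2$ of the current node, $rank_3$ stays bounded below by $rank_3(D)$, hence such a path is finite and $rank_3$ decreases monotonically --- which is the asserted monotonicity, once the path is known to end at $D$. Second, opposite sectors are reciprocal, $D\in s^w_2\iff w\in s^D_5$ (immediate from Definition~\ref{def:sectors}); combined with Property~\ref{prop:sectors} this identifies the neighbours of $w$ that lie in $s^w_2$ with the children of $w$ in the Schnyder tree $T_3$ carried by the relation $<_3^*$, i.e.\ the edges $(x,y)$ with $y\in s^x_5$: if a neighbour $v$ of $w$ satisfies $v\in s^w_2$ then $w\in s^v_5$, and since $v$ has a unique neighbour in $s^v_5$ this edge is $v$'s $T_3$-edge, so $w$ is the $T_3$-parent of $v$. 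Consequently a path of the required form is exactly a top-down path of $T_3$ issued from $u$.

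So it suffices to prove the structural fact that every vertex of $s^u_2$ is a proper $T_3$-descendant of $u$ (the converse inclusion, which we do not need, is immediate from Proposition~\ref{lemmatransitivity} since children lie in the sector $2$ of their parent). Granting it, $D\in s^u_2$ puts $D$ in the $T_3$-subtree rooted at $u$, and we walk down from $u$: at a current node $w\neq D$ the subtrees of the children of $w$ partition the descendants of $w$, so a unique child $u^{i+1}$ has $D$ in its subtree, $u^{i+1}\in s^w_2$, and since $rank_3$ strictly decreases the walk reaches $D$ after finitely many steps. To prove the structural fact I would take $v\in s^u_2$, follow the $T_3$-path $v=v_0,v_1,\dots$ to the root $A_3$, and let $v_m$ be the first vertex not in $s^u_2$; Property~\ref{prop:sectors} (inequality~(\ref{eq:propertesu5})) applied at $v_{m-1}$, whose unique $s^{v_{m-1}}_5$-neighbour is $v_m$, with test point $z=u$ --- which satisfies $u<_1 v_{m-1}$ and $u<_2 v_{m-1}$ because $v_{m-1}\in s^u_2$ --- forces $u>_3 v_m$, which rules out $v_m\in s^u_4\cup s^u_5\cup s^u_6$; the two remaining cases $v_m\in s^u_1$ and $v_m\in s^u_3$ must be excluded using the void conditions~(\ref{eq:propertesu1}) and~(\ref{eq:propertesu3}) at $u$ together with the triangularity of $G$, leaving $v_m=u$, so the $T_3$-path passes through $u$.

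The main obstacle is precisely this last exclusion: showing that a $T_3$-path cannot leave the region $s^u_2$ through an adjacent sector $s^u_1$ or $s^u_3$ without first reaching $u$. This is where the full strength of Schnyder's conditions~(\ref{eq:schnydercond}) --- not merely that the $<_i$ are total orders --- has to be used, squeezed out of the void-region Property~\ref{prop:sectors} by a case analysis on the triangle incident, on the side of $u$, to the edge $(v_{m-1},v_m)$. The degeneracies caused by the distinguished vertices $A_1,A_2,A_3$ are the routine caveat already raised in the remark preceding the proposition and do not affect the argument.
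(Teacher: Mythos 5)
Your skeleton is essentially the paper's own, transposed into tree language: the paper also walks the ancestor chain of $D$ in the relevant Schnyder tree (for $D\in s^u_4$ it follows $D,D',D'',\dots$ with $D'\in s^D_1$, i.e.\ the successive parents) and shows the chain stays in the sector of $u$ until it hits a neighbour of $u$. Your first deduction --- applying $(\ref{eq:propertesu5})$ at $v_{m-1}$ with $z=u$ to get $v_m<_3u$ --- is exactly the paper's first boxed inequality. But the step you yourself flag as ``the main obstacle'', excluding $v_m\in s^u_1$ and $v_m\in s^u_3$, is left unproven, and it is not a detail: it is where all the content of the paper's proof lives.

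Worse, the unconditional claim you reduce to (``every vertex of $s^u_2$ is a proper $T_3$-descendant of $u$'') cannot be proved, because it fails to be self-consistent. Let $a$ be $u$'s neighbour in $s^u_1$; nothing in Schnyder's axioms forbids $D\in s^u_2\cap s^a_2$ (the nine order constraints are mutually consistent: $u<_1a<_1D$, $a<_2u<_2D$, $D<_3a<_3u$). Since $u\in s^a_4$ and $a\in s^u_1$, neither of $u,a$ lies in the other's sector $5$, so by Proposition~\ref{lemmatransitivity} neither is a $T_3$-ancestor of the other and their descendant sets are disjoint --- yet your claim applied to both $u$ and $a$ would make $D$ a descendant of both. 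The paper therefore proves only a conditional statement: it first tests whether $D\in s^a_{2}$ or $D\in s^b_{2}$ for the neighbours $a\in s^u_1$, $b\in s^u_3$, and if so routes to $a$ or $b$ --- a step that leaves sector $2$ of $u$ but still decreases $rank_3$ (cf.\ Remark~\ref{remarkeven} and Algorithm~\ref{fig:speudocode}; the literal wording of the Proposition overstates what is shown). Only in the remaining case does it run your ancestor-chain argument, and the negations $D\notin s^a_2$, $D\notin s^b_2$ supply, exactly as in $(\ref{eq:notinsv4})$--$(\ref{eq:notinsw4})$, the inequalities $D<_1a$ and $D<_2b$; these propagate along the chain ($v_k<_1D<_1a$, $v_k<_2D<_2b$), and the contrapositives of $(\ref{eq:propertesu1})$ and $(\ref{eq:propertesu3})$ at $u$ then give $v_m>_2u$ and $v_m>_1u$, which is precisely the exclusion you are missing. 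To repair your proof you must weaken the goal to this conditional version and accept steps into the adjacent odd sectors, keeping the monotone coordinate as the invariant, rather than look for a cleverer triangle analysis at the edge $(v_{m-1},v_m)$.
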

\begin{proof}
For concreteness we consider $D\in s^u_4$. If $u$ is connected to $D$ we define $u^1=D$ and the proposition is true. Else, we prove below that there exists a neighboring node of $u$, $u^1$ such that $D\in s^{u^1}_4$ and $D<_1 u^1<_1 u$. Hence, by applying the construction iteratively we construct the sequence of points that satisfy $u^{i+1}\in s^{u^i}_4$, lower bounded by $D$ and decreases with respect to $<_1$, i.e. $D<_1 u^{i+1}<_1 u^i$. Such a sequence converges to $D$.


{\bf Let us prove that given $u$ such that $D\in s^u_4$ there exists $v$ such that $(u,v)\in E$, $D\in s^v_4$ and $D<_1 v<_1 u$. } $u$ is internal, by the maximality (triangulation) assumption there exists two neighboring nodes of $u$ such that $v\in s^u_3$ and $w\in s^u_5$. we then have
\begin{align}
&D <_1 u, ~D >_2 u,~D>_3 u~\Leftrightarrow D\in s^u_4\label{eqq1}\\
&v<_1 u,~v>_2 u,~v<_3 u~\Leftrightarrow v\in s^u_3\label{eqq2}\\
&w<_1 u,~w<_2 u,~w>_3 u~\Leftrightarrow w\in s^u_5\label{eqq3}
\end{align}
If $v$ (or $w$) is such that $D\in s^v_4$ (or $D\in s^w_4$) the next point on the path is $u^1=v$ (or $u^1=w$) and $(\ref{eqq2})$ shows that $v=u^1<_1 u$, and $D\in s^v_4\Rightarrow v>_1 D$ (or $(\ref{eqq3})$ shows that $w=u^1<_1 u$ , and $D\in s^w_4\Rightarrow w>_1 D$). 

\noindent Else, we have to prove that there exists a neighboring node of $u$ in the sector $s^u_4$ that satisfies the conditions. We have that $D>_2u>_2w$, and $D>_3u>_3v$ (using $(\ref{eqq1},\ref{eqq2},\ref{eqq3})$) and  $D\not\in s^v_4$ and $D\not\in s^w_4$ imply
\begin{align}
&D\not\in s^v_4\Rightarrow\left.\begin{array}{l l} D>_1v ~D<_2 v~D>_3v~~\text{ or}\\ D<_1v~D<_2v~D>_3v\end{array}\right\}\Rightarrow D<_2v\label{eq:notinsv4}\\
&D\not\in s^w_4\Rightarrow\left.\begin{array}{l l} D>_1w ~D>_2w~D<_3w~~\text{ or}\\ D<_1w~D>_2w~D<_3w\end{array}\right\}\Rightarrow D<_3w\label{eq:notinsw4}
\end{align}
Next, because $D\in s^u_4\Rightarrow u\in s^D_1$ and the maximality assumption, there exists an edge $(D,D')$ with $D'\in s^D_1$. If $D'=u$ we are done.

\noindent Else we have by the property $(\ref{eq:propertesu1})$ and $u\in s^D_1$ that $\boxed{D'<_1 u}$. 

By gathering the inequalities corresponding to $u\in s^D_1$ with the ones deduced from $(\ref{eq:notinsv4})$,$(\ref{eq:notinsw4})$ we obtain  $D<_1D', v>_2D>_2 D', w>_3D>_3D'$. Using $D' <_1u$, $D'<_2v$ with property $(\ref{eq:propertesu3})$ we obtain $\boxed{D' >_3 u}$. 

Last from $D'<_1 u$, $D' <_3 w$ and property $(\ref{eq:propertesu5})$ (with edge $(u,w)$ instead of $(u,v)$) we obtain $\boxed{D' >_2 u}$.
Finally, we have proved that $D'\in s^u_4$ with the boxes equations and $D<_1D' <_1 u$. The node $D'$ plays the same role as $D$ in the statement of the proposition but with an increasing $<_1$ order position. Because of the bound $D' <_1 u$ we see that by applying iteratively the construction we obtain a sequence $D', D'', \ldots$ that converges to $u$ and such that all the points belong to $s^u_4$. Moreover, along the sequence we have $D'\in s^{D}_1$,  $D''\in s^{D'}_1$,... and Lemma \ref{lemmatransitivity} implies that all the points in the sequence belong to $s^D_1$. In particular, for the point $x$ that is connected to $u$ $x\in s^D_1\Leftrightarrow D\in s^x_4$. We have then proved the existence of a point $x\in s^u_4$ that satisfies $D\in s^x_4$ and such that $D<_1x<_1u$.
\end{proof}

\begin{remark}{\bf Construction of the greedy path if $D\in s^u_{2i}$}\label{remarkeven}

\noindent In order to route from $u$ to $D\in s^u_{2i}$ the node $u$ must first check whether for $v\in s^u_{2i+1}$ and $w\in s^u_{2i-1}$ one of the condition $D\in s^v_{2i}$ or $D\in s^w_{2i}$ is satisfied and if yes sends the message accordingly. Otherwise, the message is forwarded to (the existing) neighboring node in $x\in s^u_{2i}$ such that $D\in s^x_{2i}$. This routing scheme converges because the coordinate $i$ decreases along the path and the path don't step over $D$ because all the points in the path are $>_1 D$.

\end{remark}

\begin{proposition}\label{lemmagreedy} Let us assume that $(u,v)\in E$ and $D,v\in s^u_1$ (or $s^u_3$ or $ s^u_5$). Then, $D\not\in s^v_3\cup s^v_4\cup s^v_5$ (or $s^v_1\cup s^v_5\cup s^v_6$ or $s^v_1\cup s^v_2\cup s^v_3$).
\end{proposition}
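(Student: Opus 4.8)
The plan is to collapse the whole statement into one coordinate comparison. Looking at Definition \ref{def:sectors}, the three sectors to be excluded, $s^v_3, s^v_4, s^v_5$, all share exactly the same defining inequality in the first order, namely $v >_1 z$ (equivalently $z <_1 v$). Consequently, to establish $D\notin s^v_3\cup s^v_4\cup s^v_5$ it suffices to prove the single inequality $v <_1 D$. (If $D=v$ there is nothing to prove, since a node lies in none of its own sectors, so one may freely assume $D\neq v$.)

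To obtain $v <_1 D$ I would invoke the minimality half of Property \ref{prop:sectors}. Since $(u,v)\in E$ and $v\in s^u_1$, the edge sits in $s^u_{2\cdot 1-1}$, so implication (\ref{eq:propertesu1}) applies: for every $z\neq v$ with $z<_2 u$ and $z<_3 u$ one has $z>_1 v$. Now take $z=D$. From $D\in s^u_1$ we read off $D<_2 u$ and $D<_3 u$, and $D\neq v$, so (\ref{eq:propertesu1}) yields $D>_1 v$, i.e. $v<_1 D$. This is exactly what was needed, so the main case is done.

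The two parenthetical variants follow by the same argument after cyclic relabelling of the coordinates. When $D,v\in s^u_3$, the sectors $s^v_1, s^v_5, s^v_6$ are precisely those whose definition forces $v>_2 z$; applying (\ref{eq:propertesu3}) to $z=D$ (using $D<_1 u$ and $D<_3 u$, which follow from $D\in s^u_3$) gives $v<_2 D$, hence $D$ lies in none of those three sectors. Likewise, when $D,v\in s^u_5$, the sectors $s^v_1, s^v_2, s^v_3$ are exactly those forcing $v>_3 z$, and (\ref{eq:propertesu5}) applied to $z=D$ (using $D<_1 u$ and $D<_2 u$) gives $v<_3 D$, excluding all three.

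I do not anticipate a real obstacle: the proof is a direct application of Property \ref{prop:sectors}, which is Schnyder's structural lemma. The only points requiring a little care are the bookkeeping that identifies the common defining inequality of the three excluded sectors in each of the three cases, and disposing of the degenerate case $D=v$ up front.
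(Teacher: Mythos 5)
Your proof is correct and follows essentially the same route as the paper: both arguments reduce the claim to the single inequality $v<_1 D$ (since $s^v_3, s^v_4, s^v_5$ all require $D<_1 v$) and obtain it from the fact that $v$ is $<_1$-minimal among the nodes of $s^u_1$. The only cosmetic difference is that you invoke this via Property \ref{prop:sectors} (implication (\ref{eq:propertesu1})), while the paper re-derives it on the spot from condition $b)$ of $(\ref{eq:schnydercond})$ applied to the edge $(u,v)$ and the node $D$.
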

\begin{proof} Let us consider $v,D\in ^u_1$ the other cases are proved similarly by a permutation of the indices. We have
\begin{equation*}
\begin{array}{l l}
v\in s^u_1\Leftrightarrow u<_1 v~~u>_2 v~~u>_3 v\\
D\in s^u_1\Leftrightarrow u<_1 D~~u>_2 D~~u>_3 D.
\end{array}
\end{equation*}
Part b) of the Schnyder's conditions $(\ref{eq:schnydercond})$ implies that $D$ must be larger than $u$ and $v$ for one order and we see on the two inequalities above that it can  only be $<_1$. The condition $D\in s^v_3\cup s^v_4\cup s^v_5$ implies that $v>_1 D$ and hence there is no $i\in\{1,2,3\}$ such that $u,v<_i D$ and the result in proved.
%
\end{proof}

\begin{remark}{\bf Construction of the greedy path if $D\in s^u_{2i-1}$}\label{remarkodd}

\noindent The practical implication of Proposition \ref{lemmagreedy} for routing is to prove the existence of a greedy path from $u$ to $D\in s_{2i-1}^u$. We decompose the construction in two parts and for concreteness we consider $D\in s^u_1$.

\noindent {\bf Part 1.}  The maximality assumption implies the existence of a node  $v\in s^u_1$ such that $(u,v)\in E$. If $v=D$ we are done. Else, $u$ sends the message to $v$ and the first coordinate $rank_1$  increases, the second one $rank_2$ and the third one $rank_3$  decrease. If $D\in s^v_1$ then $v$ repeats the same procedure and the coordinates continue to be updated monotonically and $D>_1 v$ because $D\in s^v_1$ and this implies that the first part of the construction converges to $D$ or switches to the second part. 

\noindent {\bf Part 2.} If the path reaches a node $v$ such that $D\not\in s^v_1$ the construction of the path continue with this second part. In this case $D\in s^v_2$ or $D\in s^v_6$ must be satisfied because of Proposition \ref{lemmagreedy}. In both cases we have $D>_1 v$ and we can apply Proposition \ref{prop:greedyeven} that shows the existence of a sequence of nodes $v'$ with $D\in s^{v'}_2$ or  $D\in s^{v'}_6$ respectively and this sequence eventually reaches $D$. If  $D\in s^{v'}_2$ then by Proposition \ref{prop:greedyeven} the coordinate $rank_3$ continues to decrease along the second part of the construction. If $D\in s^{v'}_6$ the coordinate $rank_2$ continues to decrease. In both cases we have shown that along the two parts of the construction one coordinate ($rank_2$ or $rank_3$) decreases monotonically and the resulting path is then {\it greedy}.
\end{remark}

\section{Routing in maximal planar graph}\label{routingmaxplanargraph}
In \cite{Dhandapani08}, it is proved using Schnyder's characterization of planar graphs $(\ref{eq:schnydercond})$ that there exists an embedding of the graph in the plane\footnote{Actually in the plane in $\mathbb R^3$ such that $x+y+z=1$.} such that greedy routing is successful (using the natural metric). In \cite{DBLP:dblp_journals/tcs/HeZ13} the authors use a similar coordinate system  and design a (simple) routing algorithm. Both papers use a {\it realizer} as defined in \cite{schnyder}. In the setting of unit Disk Graph (UDG) it is shown in \cite{DBLP:conf/algosensors/HucJLR10} that Schnyder's characterization prove to be useful for planarizing and routing on the communication graph. In \cite{DBLP:conf/icpads/SamarasingheL12,DBLP:conf/sensornets/SamarasingheL14}\footnote{Notice that our definition of greedy routing is different than the one used in \cite{DBLP:conf/icpads/SamarasingheL12,DBLP:conf/sensornets/SamarasingheL14}} the classical greedy-face routing paradigm is designed using coordinate similar to the ones used in the present article, see Figure \ref{fig:exvrac}, in the setting of UDG.

\noindent Our construction of the greedy routing algorithm on planar triangulations is summarized in the next Theorem. The pseudo-code of the algorithm is provided in Algorithm \ref{fig:speudocode} and the correctness of the algorithm is proved in the remarks \ref{remarkeven} and \ref{remarkodd} of the construction of the path if $D\in s^u_{2i}$ or $D\in s^u_{2i+1}$ that follow the Propositions \ref{prop:greedyeven} and \ref{lemmagreedy}.
\begin{theorem} Let us consider two nodes $u$ and $D$ of a planar triangulations. Using the coordinate system in Definition \ref{def:coordinates} there exists a greedy path between the two nodes. Moreover, the routing algorithm is local, only the coordinate of the destination and the neighboring nodes are necessary.
\end{theorem}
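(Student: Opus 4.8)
The plan is to obtain the theorem as a direct assembly of the two constructions already in hand, Propositions \ref{prop:greedyeven} and \ref{lemmagreedy}, together with the routing rules recorded in Remarks \ref{remarkeven} and \ref{remarkodd}. First I would note that the empty-intersection property $a)$ of $(\ref{eq:schnydercond})$ forces the six sectors $s^u_1,\dots,s^u_6$ of Definition \ref{def:sectors} to partition $V\setminus\{u\}$: the position of a node $w\neq u$ relative to $u$ in the three total orders is a triple of strict inequalities, and the two triples that are ``all $<$'' or ``all $>$'' are excluded by $a)$, so exactly one of the six admissible sign patterns occurs. Hence the destination $D$ lies in a unique sector $s^u_j$, and the proof splits into the case $j$ even and the case $j$ odd.

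For $j$ even, say $D\in s^u_{2i}$, there is essentially nothing left to do: Proposition \ref{prop:greedyeven} already produces a path from $u$ that stays in the ``$2i$'' sectors, converges to $D$, and along which the coordinate $rank_3$, $rank_1$ or $rank_2$ (according to $2i=2,4,6$) decreases by at least one unit at every hop, so by the definition of a greedy path this path is greedy. For $j$ odd, say $D\in s^u_{2i-1}$, I would run the two-phase scheme of Remark \ref{remarkodd}: Phase 1 follows the unique edge into the current node's odd sector $s^x_{2i-1}$ as long as $D$ is still seen there --- two coordinates decrease and one increases, and since the increasing one is bounded above by the corresponding coordinate of $D$ the phase must terminate, either at $D$ or at a node $v$ with $D\notin s^v_{2i-1}$. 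At such a $v$, Proposition \ref{lemmagreedy} rules out $D$ from three of the five remaining sectors, leaving only $D\in s^v_{2i}$ or $D\in s^v_{2i-2}$, so Phase 2 is an instance of the even case handled by Proposition \ref{prop:greedyeven}.

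The step I expect to be the real, if modest, obstacle is checking that a \emph{single} coordinate is monotone across the join of the two phases --- this is what certifies the combined path as genuinely greedy rather than merely piecewise greedy. Concretely, for $D\in s^u_1$ Phase 1 decreases both $rank_2$ and $rank_3$, while Phase 2 decreases $rank_3$ if the switch occurs at a node with $D\in s^v_2$ and decreases $rank_2$ if it occurs at a node with $D\in s^v_6$; in both sub-cases one coordinate ($rank_3$ or $rank_2$ respectively) is decreasing along the whole path, and the two remaining odd starting sectors $s^u_3$, $s^u_5$ follow by the cyclic relabelling of indices used elsewhere in the paper. Convergence to $D$ then comes for free: an integer coordinate that strictly decreases while staying bounded below by that coordinate of $D$ can only do so finitely many times, which is already the convergence statement of Proposition \ref{prop:greedyeven} and of Remark \ref{remarkodd}.

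Finally I would dispatch the locality claim and the degenerate nodes. For locality, at a node $x$ the next hop is determined by the index of the sector $s^x_D$, which by the observation after Definition \ref{def:sectors} is read off from the coordinates of $x$ and $D$ alone, plus --- in the even sub-routine --- the outcome of the membership tests $D\in s^y_{\,\cdot}$ for the at most three neighbours $y$ of $x$ lying in $s^x_{2i-1}$, $s^x_{2i+1}$ and $s^x_{2i}$, each again a comparison among the coordinates of $x$, $D$ and that neighbour; no global data enters. The external nodes $A_1,A_2,A_3$, which may lack edges in their odd sectors, are handled by the remark following the definition of greedy path: from any internal node one of $rank_1,rank_2,rank_3$ can be increased monotonically until $A_i$ is reached, and the three $A_i$ are mutually adjacent, so greedy paths into and among them exist as well, which completes the argument for every pair $u,D$.
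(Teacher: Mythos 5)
Your proposal is correct and follows essentially the same route as the paper: the case split on the parity of the sector $s^u_D$, Proposition \ref{prop:greedyeven} for the even case, the two-phase construction of Remark \ref{remarkodd} (with Proposition \ref{lemmagreedy} restricting the switch to $s^v_{2i}$ or $s^v_{2i-2}$) for the odd case, and the observation that one of $rank_2$, $rank_3$ decreases across both phases so the concatenated path is greedy. Your added details --- the six sectors partitioning $V\setminus\{u\}$ via property $a)$ of $(\ref{eq:schnydercond})$, and the explicit locality check --- are consistent with, and slightly more explicit than, what the paper records.
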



\begin{algorithm}[t]
\caption{Pseudo-code of the greedy routing}\label{fig:speudocode}
\begin{algorithmic}[1]
\State \textsc{input} Source $u$, Destination $D$
\Repeat
\If {$D\in {\cal N}_u$} $u=D$\Comment ${\cal N}_u$ is the set of neighbors of $u$
\Else
\If  { $D\in s^u_{2i-1}$}
	\State $u=v\in s^u_{2i-1}$ s.t. $(u,v)\in E$ \Comment $v$ is unique \vspace{0.1cm}
\Else \Comment  $D\in s^u_{2i}$ consider $v\in s^u_{2i-1}$ and $w\in s^u_{2i+1}$ s.t. $(u,v), (u,w)\in E$
	\If { $D\in s^v_{2i}$} 
		\State $u=v$ 
	\Else 
		\If{$D\in s^w_{2i}$} 
			\State $u=w$
		\Else \State $u=x\in s^u_{2i}$ s.t. $D\in s^x_{2i}$ \Comment must exist by Proposition \ref{prop:greedyeven}
		\EndIf
	\EndIf
\EndIf
\EndIf
\Until{u=D}
\end{algorithmic}
\end{algorithm}
\vspace{-0.2cm}

\section{Conclusion}
It this article we provide a definition of greedy routing that is independent of any graph embedding in metric space. Moreover, we use a new coordinate system such that greedy routing guarantees delivery and again without reference to any embedding of the graph. Besides the theoretical relevance, an important motivation for this article is to make geographic routing a real practical solution. This is the reason for avoiding the computation in the localization phase of the nodes. Moreover, the coordinate system that we use in the present paper uses only 3 integers and requires  ${\cal O}\bigl(log(n)\bigr)$ bits and can be qualified of {\it succint} \cite{eppstein2011succinct}. We emphasize that our algorithm does not require to embed the graph in a metric space. The next step towards the development of a general practical routing algorithm is to provide a distributed algorithm that extract  the standard representation of the communication graph (or a subgraph). Such an algorithm would generalize the approach in  \cite{DBLP:conf/icpads/SamarasingheL12,DBLP:conf/sensornets/SamarasingheL14} where the coordinates are measured and the techniques could be merged to overcome the situation where the graph is not maximal.

From a theoretical point of view, the result presented in the present article asks the question whether there exists graphs that admit greedy routing but no greedy embeddings. 

We have implemented our algorithm and simulated the routing process for some random planar triangulations. We obtained the random networks by placing the three distinguished nodes $A_1, A_2, A_3$ on the top of an equilateral triangle on the plane and generated nodes at random inside the triangle. The three order relations are defined by measuring the distances from the nodes to the distinguished nodes, i.e. $u<_i v\Leftrightarrow d(A_i,u)>d(A_i,v)$, $i=1,2,3$ where $d$ is the Euclidean distance on the plane, see the left of Figure \ref{fig:VRAC}. The equilateral triangle property implies that the representation is standard and constraining the nodes inside the triangle implies the Schnyder's empty property, see Property $a)$ of $(\ref{eq:schnydercond})$, \cite{DBLP:conf/sensornets/SamarasingheL14}. Then, the edges are generated in order to satisfy the condition $b)$ of $(\ref{eq:schnydercond})$. This is easily done by connecting each node $v$ to the three nodes  $v_1\in s^v_1, v_2\in  s^v_3,$ and $v_3\in  s^v_5$ such that $v_{i}=min_i\{ z\mid z\in s^v_{2i-1}\}$, $i=1,2,3$, accordingly to Property \ref{prop:sectors}. An instance of the random graphs is presented on Figure \ref{fig:experiment}.

 Our experiments validated the algorithm positively. The code and more complete explanations can be obtained upon request.

To conclude, we ask the question whether all planar triangulations can be drawn in the plane by distinguishing three nodes  $A_1, A_2, A_3$ and placing the other nodes on the plane inside the triangular region delimited by  $A_1, A_2, A_3$ \footnote{Similarly to how we produced random planar triangulations.} in such a way that the Schnyder orders are similar to the orders we obtain by measuring the distances from the nodes to the distinguished nodes. Answering positively would lead to a description (or even an algorithm) of a greedy embedding for planar triangulations.

\begin{figure}[!]\hskip -2cm
\hspace{2cm}
\centering
\includegraphics[scale=0.2]{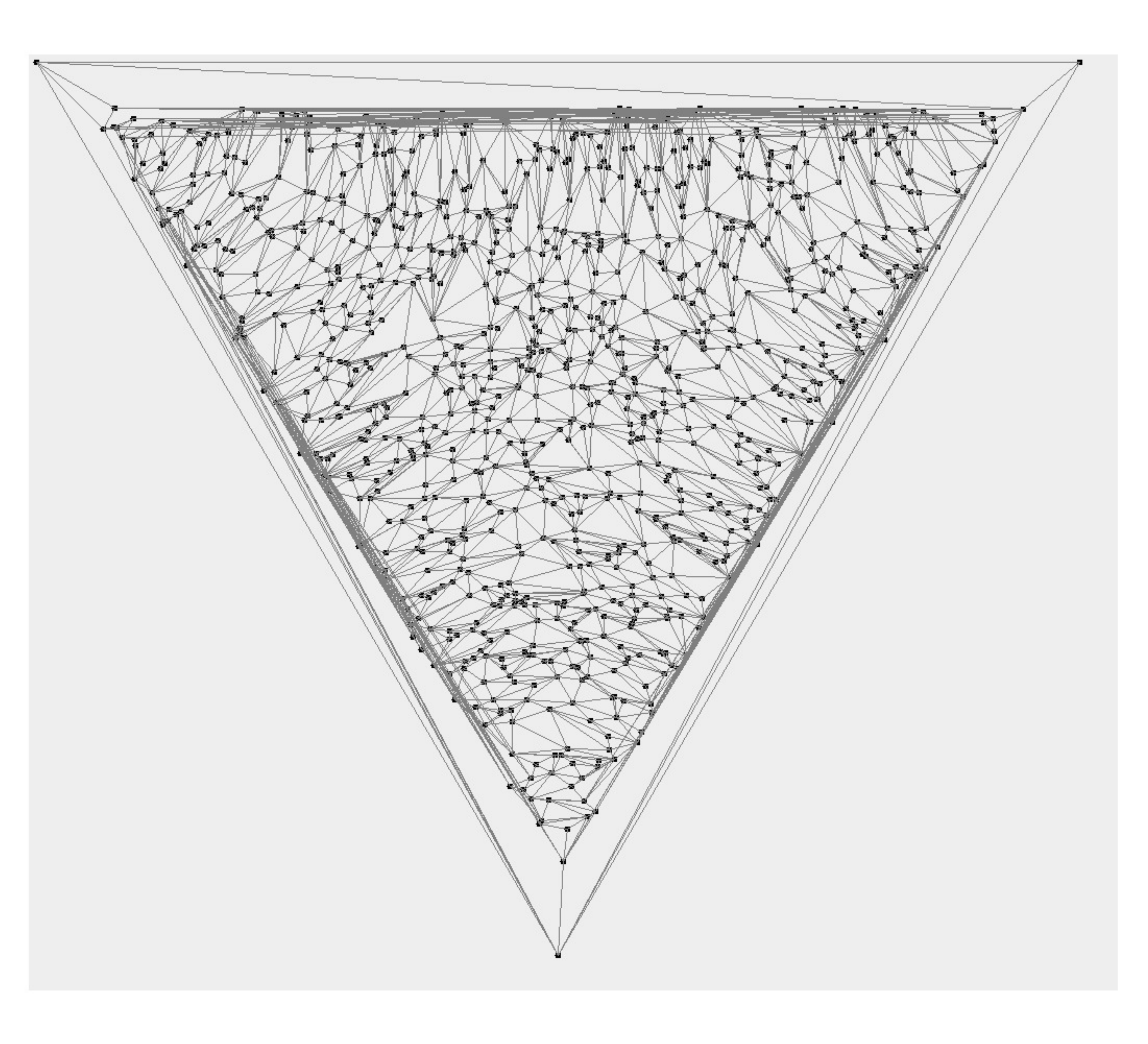}
\caption{Triangular Graph Generated by Experiments}
\label{fig:experiment}
 \end{figure}

\bibliographystyle{elsarticle-num} 
\bibliography{wsn,biblo}
\end{document}